\documentclass[10pt,journal,compsoc]{IEEEtran}
\usepackage[boxed,ruled,vlined,linesnumbered]{algorithm2e}
\usepackage{multirow}
\usepackage{hhline}
\usepackage{url}
\usepackage{framed}
\usepackage{longtable}
\usepackage{mdframed}

\sloppy
\raggedbottom
\newcommand*\wrapletters[1]{\wr@pletters#1\@nil}
\def\wr@pletters#1#2\@nil{#1\allowbreak\if&#2&\else\wr@pletters#2\@nil\fi}
\usepackage{tikz}
\usetikzlibrary{automata,matrix,shapes,arrows,positioning,chains,calc}
\usetikzlibrary{snakes}
\usetikzlibrary{arrows,scopes}
\usetikzlibrary{positioning,chains,fit,shapes,calc}
\usepackage{caption}
\usepackage{subcaption}
\usepackage{amsmath}
\SetKwHangingKw{Local}{Local}
\SetKwHangingKw{Global}{Global}
\SetKwHangingKw{Constant}{Constant}
\SetKwHangingKw{Input}{Input}
\SetKwHangingKw{Alias}{Alias}
\SetKwHangingKw{Output}{Output}
\SetKwHangingKw{SideEffect}{Side effects}
\SetKwHangingKw{InternalEvent}{Internal event}
\SetKwHangingKw{External}{External}
\SetKwHangingKw{ExternalEvent}{External event}
\SetKwHangingKw{Precondition}{Precondition :}
\SetKwHangingKw{Postcondition}{Postcondition :}
\SetKwHangingKw{Upon}{Upon}
\SetKwHangingKw{DoForever}{Do Forever}
\SetKwHangingKw{PVab}{Persistent variables:}

\usepackage{theorem}
\usepackage{wrapfig}
\usepackage{amssymb}
\usepackage{xspace}

\newtheorem{theorem}{Theorem}

\newcommand{\qed}{\hfill $\blacksquare$}
\newcommand{\qedClaim}{\hfill \ensuremath{\Box}}
\newenvironment{proof}{\noindent {\bf Proof.}\ }{\qed\par\vskip 0mm\par}

\newcommand{\B}{\vspace*{-\smallskipamount}}

\newcommand{\BBB}{\vspace*{-\bigskipamount}}

\newcommand{\remove}[1]{}

%\circ~

\usepackage{csquotes}
\usepackage{enumitem}
\usepackage{tablefootnote}

\ifCLASSOPTIONcompsoc
  % IEEE Computer Society needs nocompress option
  % requires cite.sty v4.0 or later (November 2003)
  \usepackage[nocompress]{cite}
\else
  % normal IEEE
  \usepackage{cite}
\fi
\ifCLASSINFOpdf
\else
\fi

%<^!CHANGE!
%\ifCLASSINFOpdf
\hyphenation{op-tical net-works semi-conduc-tor}

\IEEEoverridecommandlockouts

\begin{document}
\title{Meta-MapReduce\\A Technique for Reducing Communication in MapReduce Computations\thanks{Foto Afrati is withNational Technical University of Athens, Greece (e-mail: \texttt{afrati@softlab.ece.ntua.gr}).\protect\\
Shlomi Dolev is with Ben-Gurion University of the Negev, Beer-Sheva, Israel (e-mail: \texttt{dolev@cs.bgu.ac.il}). \protect\\
Shantanu Sharma is with Ben-Gurion University of the Negev, Beer-Sheva, Israel (e-mail: \texttt{sharmas@cs.bgu.ac.il}). \protect\\
Jeffrey D. Ullman is with Stanford University, USA (e-mail: {ullman@cs.stanford.edu}).\protect\\
A brief announcement this work is accepted in 17th International Symposium on Stabilization, Safety, and Security of Distributed Systems (SSS), 2015. This work of the first author is supported by the project Handling Uncertainty in Data Intensive Applications, co-financed by the European Union (European Social Fund) and Greek national funds, through the Operational Program \enquote{Education and Lifelong Learning,} under the program THALES. This work of the second author is supported by the Rita Altura Trust Chair in Computer Sciences, Lynne and William Frankel Center for Computer Sciences, Israel Science Foundation (grant 428/11), the Israeli Internet Association, and the Ministry of Science and Technology, Infrastructure Research in the Field of Advanced Computing and Cyber Security.}}

%\author{Shlomi Dolev,~\IEEEmembership{Senior Member,~IEEE}, Yin Li, Shantanu Sharma,~\IEEEmembership{Student Member,~IEEE}}

\author{Foto Afrati, Shlomi Dolev, Shantanu Sharma, and Jeffrey D. Ullman}

%% The paper headers
%\markboth{IEEE Transaction on Cloud Computing}%
%{Shell \MakeLowercase{\textit{et al.}}: Bare Advanced Demo of IEEEtran.cls for Journals}
\IEEEtitleabstractindextext{
\begin{abstract}
MapReduce has proven to be one of the most useful paradigms in the revolution of distributed computing, where cloud services and cluster computing become the standard venue for computing. The federation of cloud and big data activities is the next challenge where MapReduce should be modified to avoid (big) data migration across remote (cloud) sites. This is exactly our scope of research, where only the very essential data for obtaining the result is transmitted, reducing communication, processing and preserving data privacy as much as possible. In this work, we propose an algorithmic technique for MapReduce algorithms, called Meta-MapReduce, that decreases the communication cost by allowing us to process and move metadata to clouds and from the map phase to reduce phase. In Meta-MapReduce, the reduce phase fetches only the required data at required iterations, which in turn, assists in preserving the data privacy.
\end{abstract}
\begin{IEEEkeywords}
MapReduce algorithms, distributed computing, locality of data and computations, mapping schema, and reducer capacity.
\end{IEEEkeywords}}
% make the title area
\maketitle

\IEEEdisplaynontitleabstractindextext
\IEEEpeerreviewmaketitle
\ifCLASSOPTIONcompsoc
\IEEEraisesectionheading{\section{Introduction}\label{sec:introduction}}
\else

\section{Introduction}
\label{section:introduction}
\fi
MapReduce~\cite{DBLP:conf/osdi/DeanG04} is a programming system used for parallel processing of large-scale data. The given input data is processed by the \emph{map phase} that applies a user-defined map function to each input and produces intermediate data (of the form $\langle \mathit{key, value} \rangle$). Afterwards, intermediate data is processed by the \emph{reduce phase} that applies a user-defined reduce function to $\mathit{key}$s and all their associated $\mathit{value}$s. The final output is provided by the reduce phase. A detailed description of MapReduce can be found in Chapter 2 of~\cite{DBLP:books/ullman2011}.

\parskip 0pt
\setlength{\parindent}{15pt}

%\medskip
\noindent \textit{Mappers, Reducers, and Reducer Capacity.} A \emph{mapper} is an application of a map function to a single input. A \emph{reducer} is an application of a reduce function to a single $\mathit{key}$ and its associated list of $\mathit{value}$s. The \emph{reducer capacity}~\cite{DBLP:journals/corr/AfratiDK0U15a} --- an important parameter --- is an upper bound on the sum of the sizes of the $\mathit{value}$s that are sent to the reducer. For example, the reducer capacity may be the size of the main memory of the processors on which the reducers run. The capacity of a reducer is denoted by $q$, and all the reducers have an identical capacity.

%\medskip\noindent\textbf{Locality of data and communication cost.}
\subsection{Locality of Data and Communication Cost}
Input data to a MapReduce job, on one hand, may exist at the same site where mappers and reducers reside. However, ensuring an identical location of data and mappers-reducers cannot always be guaranteed. On the other hand, it may be possible that a user has a single local machine and wants to enlist a public cloud to help data processing. Consequently, in both the cases, it is required to move data to the location of mappers-reducers. Interested readers may see examples of MapReduce computations where the locations of data and mappers-reducers are different in~\cite{heintzend,ghadoop}. A review on geographically distributed data processing frameworks based on MapReduce may be found in~\cite{geo-hadoop}.

\parskip 0pt
\setlength{\parindent}{15pt}

In order to motivate and demonstrate the impact of different locations of data and mappers-reducers, we consider two real examples, as follows:
\begin{description}[nolistsep]
  \item[Amazon Elastic MapReduce] Amazon Elastic MapReduce (EMR)\footnote{http://aws.amazon.com/elasticmapreduce/} processes data that is stored in Amazon Simple Storage Service (S3)\footnote{http://aws.amazon.com/s3/}, where the locations of EMR and S3 are not identical. Hence, it is required to move data from S3 to the location of EMR. However, moving the whole dataset from S3 to EMR is not efficient if only small specific part of it is needed for the final output.
  \item[G-Hadoop and Hierarchical MapReduce] Two new implementations of MapReduce, G-Hadoop~\cite{ghadoop} and Hierarchical MapReduce~\cite{hmr}, perform MapReduce computations over geographically distributed clusters. In these new implementations, several clusters execute an assigned MapReduce job in parallel, and provide \emph{partial} outputs. Note that the output of a cluster is not the final output of a MapReduce job, and the final output is produced by processing partial outputs of all the clusters at a single cluster. Thus, inter-cluster data transfer --- transmission of partial outputs of all the clusters to a single cluster --- is required for producing the final output, as the location of the partial outputs of all the clusters and the location of the final computation are not identical. However, moving the whole partial outputs of all the clusters to a single cluster is also not efficient if only small portion of the clusters' outputs is needed to compute the final output.

\begin{figure}[h]
\centering
\includegraphics[scale=0.32]{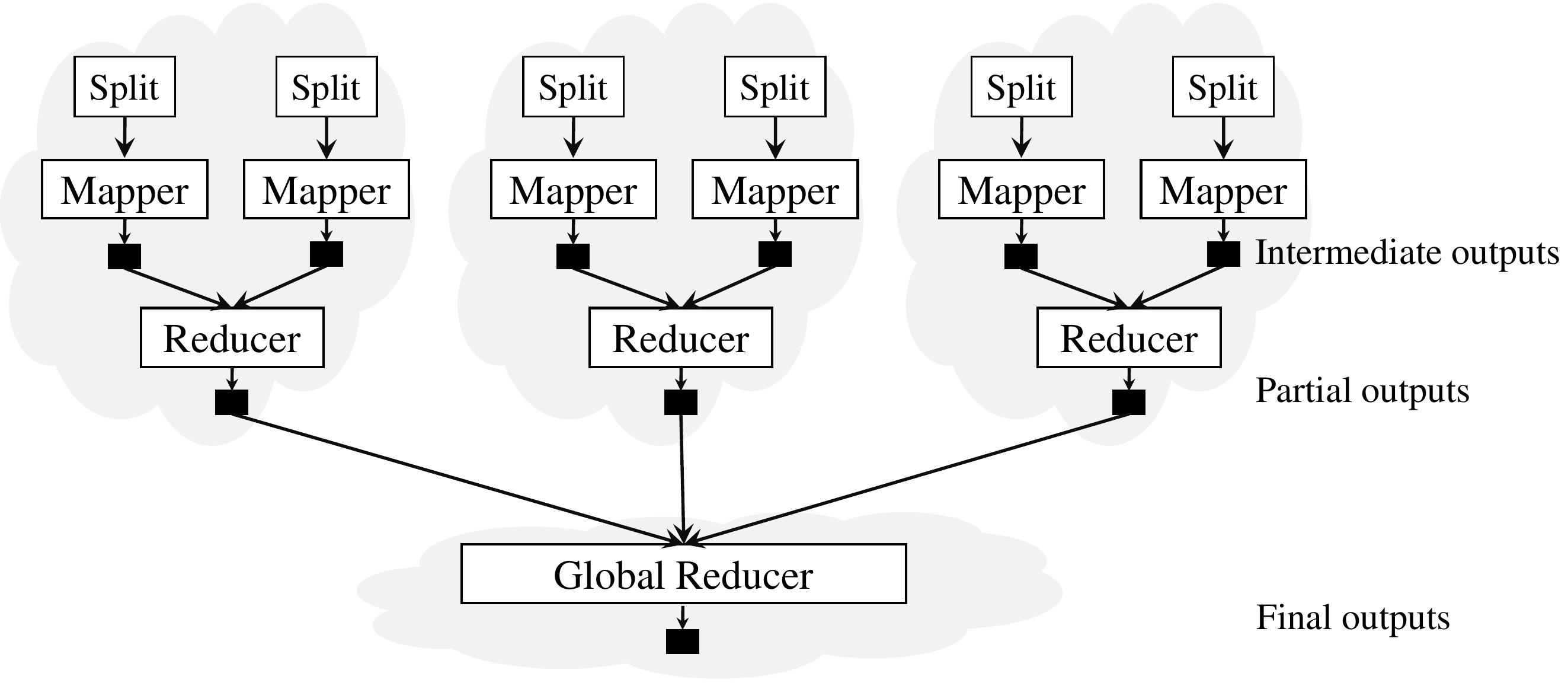}
\caption{Different locations of MapReduce clusters in Hierarchical MapReduce.}
\label{fig:pic_hmr}
\end{figure}

Hierarchical MapReduce is depicted in Figure~\ref{fig:pic_hmr}, where three clusters process data using MapReduce in parallel, and the output of all three clusters is required to be sent to one of the clusters or another cluster, which executes a \emph{global reducer} for providing the final output. In Figure~\ref{fig:pic_hmr}, it is clear that the locations of partial outputs of all the clusters and the location of the global reducers is not identical; hence, partial outputs of all the clusters are required to be transferred to the location of the global reducer.
\end{description}

\medskip\noindent\textbf{Communication cost}. The \textit{communication cost} dominates the performance of a MapReduce algorithm and is the sum of the amount of data that is required to move from the location of users or data (\textit{e}.\textit{g}., S3) to the location of mappers (\textit{e}.\textit{g}., EMR) and from the map phase to the reduce phase in each round of a MapReduce job. For example, in Figure~\ref{fig:pic_hmr}, the communication cost is the sum of the total amount of data that is transferred from mappers to reducers in each cluster and from each cluster to the site of the global reducer.

In this paper, we are interested in minimizing the data transferred in order to avoid communication and memory overhead, as well as to protect data privacy as much as possible. In MapReduce, we transfer inputs to the site of mappers-reducers from the site of the user, and then, several copies of inputs from the map phase to the reduce phase in each iteration, regardless of their involvement in the final output. If few inputs are required to compute the final output, then it is not communication efficient to move all the inputs to the site of mappers-reducers, and then, the copies of same inputs to the reduce phase.

There are some works that consider the location of data~\cite{Palanisamy-Purlieus-2011,DBLP:conf/hpdc/ParkLKHM12} in a restrictive manner and some works~\cite{DBLP:journals/corr/abs-1206-4377,ding2013commapreduce,TCS} that consider data movement from the map phase to the reduce phase. We enhance the model suggested in~\cite{TCS} and suggest an algorithmic technique for MapReduce algorithms to decrease the communication cost by moving only relevant input data to the site of mappers-reducers. Specifically, we move metadata of each input instead of the actual data, execute MapReduce algorithms on the metadata, and only then fetch the actual required data needed to compute the final output.

%\subsection{Motivating Examples}

%\medskip\medskip \noindent \textbf{Motivating examples.}
\subsection{Motivating Examples}
We present two examples (equijoin and entity resolution) to show the impact of different locations of data and mappers-reducers on the communication cost involved in a MapReduce job.

\parskip 0pt
\setlength{\parindent}{15pt}

\medskip
\noindent\textbf{Equijoin of two relations $X(A,B)$ and $Y(B,C)$.} \textit{Problem statement}: The join of relations $X(A,B)$ and $Y(B,C)$, where the joining attribute is $B$, provides output tuples $\langle a, b, c\rangle$, where $(a,b)$ is in $A$ and $(b,c)$ is in $C$. In the equijoin of $X(A,B)$ and $Y(B,C)$, all tuples of both the relations with an identical value of the attribute $B$ should appear together at the same reducer for providing the final output tuples. In short, a mapper takes a single tuple from $X(A,B)$ or $Y(B,C)$ and provides $\langle B,X(A)\rangle$ or $\langle B,Y(C)\rangle$ as key-value pairs. A reducer joins the assigned tuples who have an identical key. In Figure~\ref{fig:join_mr}, two relations $X(A,B)$ and $Y(B,C)$ are shown, and \textit{we consider that the size of all the $B$ values is very small as compared to the size of values of the attributes $A$ and $C$}.

\begin{figure}[h]
\begin{center}
\includegraphics[scale=0.35]{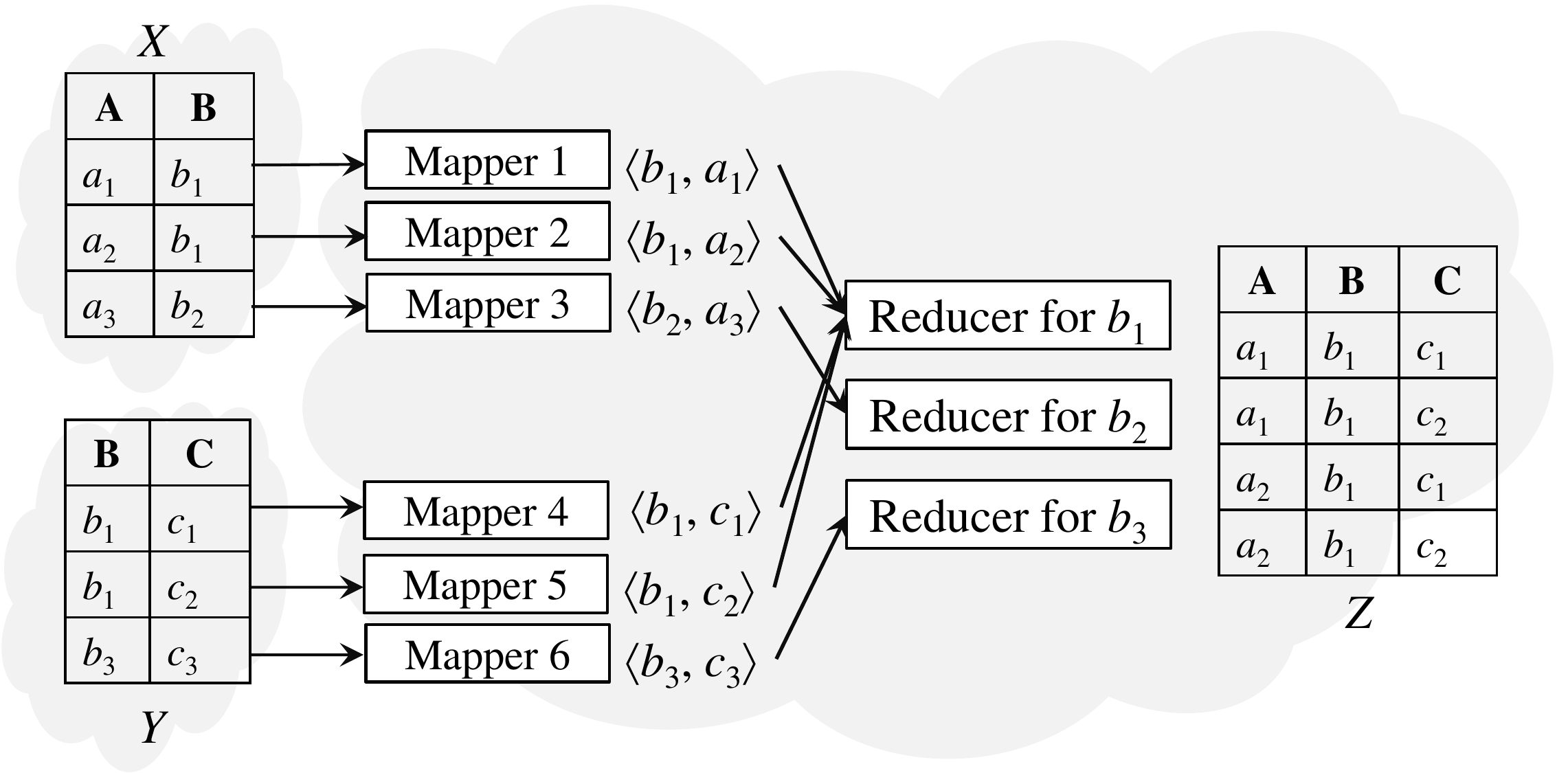}
\end{center}
\caption{Equijoin of relations $X(A,B)$ and $Y(B,C)$.}
\label{fig:join_mr}
\end{figure}

\textit{Communication cost analysis}: We now investigate the impact of different locations of the relations and mappers-reducers on the communication cost. In Figure~\ref{fig:join_mr}, the communication cost for joining of the relations $X$ and $Y$ --- where $X$ and $Y$ are located at two different clouds and equijoin is performed on a third cloud --- is the sum of the sizes of all three tuples of each relation that are required to move from the location of the user to the location of mappers, and then, from the map phase to the reduce phase. Consider that each tuple is of unit size, and hence, the total communication cost is 12 for obtaining the final output.

However, if there are a few tuples having an identical $B$-value in both the relations, then it is useless to move the whole relations from the user's location to the location of mappers, and then, tuples from the map phase to the reduce phase. In Figure~\ref{fig:join_mr}, two tuples of $X$ and two tuples of $Y$ have a common $B$ value (\textit{i}.\textit{e}., $b_1$). Hence, it is not efficient to send tuples having values $b_2$ and $b_3$, and by not sending tuples with $B$ values $b_2$ and $b_3$, we can reduce the communication cost.

\medskip\noindent \textbf{Entity resolution.} \textit{Problem statement}: Entity resolution using MapReduce is suggested in~\cite{TCS}. A solution to the entity resolution problem provides disjoint subsets of records, where records match other records if they pass a similarity function, and these records belong to a single entity (or person). For example, if voter-card, passport, student-id, driving-license, and phone numbers of several people are given, a solution to the entity resolution problem makes several subsets, where a subset corresponding to a person holds their voter-card, passport, student-id, driving-license, and phone numbers.

\textit{Communication cost analysis}: The authors~\cite{TCS} provided a solution to the entity resolution problem with decreased the communication cost between the map phase and the reduce phase, by not sending the record of a person to a reducer, if the person has only a single record. However, in this model for every pair of records that share a reducer, a copy of one is transferred to the site of reducers, and in this manner the communication cost is $\frac{n(n-1)}{2}$, if $n$ records are assigned to a reducer. Note that by using the proposed technique of this paper, the communication cost for the entity resolution problem will be $n$, if $n$ records are assigned to a reducer.

%Other examples that fit in the four parameters are the drug-interaction problem~\cite{DBLP:journals/crossroads/Ullman12} (where a set of inputs consists of 6,500 drugs, and a drug $i$ holds information about the medical history of patients who had taken drug $i$. The objective is to find pairs of drugs that have particular side effects), computing common friends, different kinds of joins (interval join, spatial join, and star join), outer product, and Kronecker product.

\subsection{Problem Statement and Our Contribution}
%\medskip\noindent \textbf{Problem statement.}
We are interested in reducing the amount of data to be transferred to the site of the cloud executing MapReduce computations and the amount of data transferred from the map phase to the reduce phase. From the preceding two examples, it is clear that in several problems, the final output depends on \textit{some} inputs, and in those cases, it is not required to send the whole input data to the site of mappers and then (intermediate output) data from the map phase to the reduce phase. Specifically, we consider two scenarios for reducing the communication cost, where: (\textit{i}) the locations of data and mappers-reducers are different, and (\textit{ii}) the locations of data and mappers are identical. Note that in the first case, data is required to move from the user's site to the computation site and then from the map phase to the reducer phase, while in the second case, data is transferred only from the map phase to the reduce phase.

\parskip 0pt
\setlength{\parindent}{15pt}

In addition to the locations of data and computations, we are also considering the number of iterations involved in a MapReduce job and the reducer capacity (\textit{i}.\textit{e}., the maximum size of inputs that can be assigned to a reducer).

\medskip \noindent \textbf{Our contributions.} In this paper, we provide the following:
\begin{itemize}[nolistsep,leftmargin=0.5cm]
%\item Are considering, for the first time, four important parameters (the communication cost, the locality of data and mappers-reducers, the reducer capacity, and the total number of iterations) of a MapReduce job all together. Specifically, we focus on two scenarios where the locations of data and mappers-reducers are identical and non-identical. Keeping in mind these two scenarios, we try to decrease the total communication cost in a MapReduce job.

\item \textbf{An algorithmic approach for MapReduce algorithms.} We provide a new \textit{algorithmic} approach for MapReduce algorithms, Meta-MapReduce (Section~\ref{section:meta-mapreduce}), that decreases the communication cost significantly. Meta-MapReduce regards the locality of data and mappers-reducers and avoids the movement of data that does not participate in the final output. Particularly, \textbf{Meta-MapReduce provides a way to compute the desired output using metadata}\footnote{The term metadata is used in a different manner, and it represents a small subset, which varies according to tasks, of the dataset.} (which is much smaller than the original input data) and avoids to upload the whole data (either because it takes too long or for privacy reasons). It should be noted that we are enhancing MapReduce and not creating entirely a new framework for large-scale data processing; thus, Meta-MapReduce is \textit{implementable} in the state-of-the art MapReduce systems such as Spark~\cite{zaharia2010spark}, Pregel~\cite{DBLP:conf/sigmod/MalewiczABDHLC10}, or modern Hadoop.

\item \textbf{Data-privacy in MapReduce computations.} Meta-MapReduce also allows us to protect data privacy as much as possible in the case of an \emph{honest-but-curious} adversary by not sending all the inputs. For example, in the case of equijoin, processing tuple $\langle a_i, b_i \rangle$ of a relation $X$ and $\langle b_j, c_i \rangle$ of a relation $Y$ based on metadata does not reveal the actual tuple information until it is required at the cloud. It should be noted that the relations $X$ and $Y$ can deduce that the relations $Y$ and $X$ have no tuple with value $b_i$ and $b_j$, respectively. However, the outcome of both relations do not imply the actual value of $a_i$, $b_i$, and $c_i$.

    Nevertheless, by the auditing process, a \emph{malicious} adversary can be detected. Moreover, in some settings auditing enforces participants to be honest-but-curious rather than malicious, as malicious actions can be audited, discovered, and imply punishing actions. %Note that our technique considers the existence of a reducer input cache, a reducer output cache, and a mapper input cache~\cite{DBLP:journals/vldb/BuHBE12}.

%\item Provide a tradeoff between the amount of data and the total number of read operations at the reduce phase (Sections~\ref{section:the_system_setting} and~\ref{subsec:Meta-MapReduce framework}) and a tradeoff between the amount of (meta)data and the total number of reducers (Section~\ref{subsec:Meta-MapReduce framework}).

%\item Provide a tradeoff between the amount of (meta)data and the total number of reducers (Section~\ref{subsec:Meta-MapReduce framework}).

\item \textbf{Other applications.}
\begin{itemize}[noitemsep,nolistsep,leftmargin=0.5cm]
\item Integrate Meta-MapReduce for processing geographically distributed data (Section~\ref{subsec:Incorporating Meta-MapReduce in G-Hadoop and Hierarchical MapReduce}) and for processing multi-round MapReduce jobs (Section~\ref{subsec:Multi-round Iterations}).

\item We show versatile applicability of Meta-MapReduce by performing equijoin, $k$-nearest-neighbors, and shortest path findings on a social networking graph using Meta-MapReduce and show how Meta-MapReduce decreases the communication cost (Sections~\ref{subsec:Meta-MapReduce framework} and~\ref{sec:Versatility of Meta-MapReduce}).
\end{itemize}
\end{itemize}

\subsection{Related Work}
%\medskip \noindent \textbf{Related work.}
MapReduce was introduced by Dean and Ghemawat in 2004~\cite{DBLP:conf/osdi/DeanG04}. ComMapReduce~\cite{ding2013commapreduce} decreases communication between the map and reduce phases (for some problems) by allowing mappers to inform a \textit{coordinator} about their outputs, and the coordinator finds those outputs of mappers that will not participate in the final output. The coordinator informs reducers not to fetch those (undesired) outputs of the map phase. However, the model does not allow computation over metadata for discovering the actual data needed. Thus, in their limited scope of data filtering, there is no need to consider, the number iterations, the reducer capacity, and the locality of data.

\parskip 0pt
\setlength{\parindent}{15pt}

Another framework, Purlieus~\cite{Palanisamy-Purlieus-2011}, considers the location of the data only and tries to deploy mappers and reducers near the location of data. In~\cite{TCS}, a model is given for computing records that belong to a single person. The model decreases the communication cost by not sending the record of a person to the reduce phase, if the person has only a single record. However, the model does not consider the reducer capacity and different locations of data and mappers-reducers. Afrati et al.~\cite{DBLP:journals/corr/AfratiDK0U15a} presents a model regarding the reducer capacity to find a lower bound on communication cost as a function of the total size of inputs sent to a reducer. The model considers a problem where an output depends on \emph{exactly} two inputs. However, the model assumes an identical location of data and mappers-reducers.

In summary, all the existing models provides a number of ways that MapReduce algorithms can be sped up if we are running on a platform that allows certain operations. However, these models~\cite{ding2013commapreduce,Palanisamy-Purlieus-2011,TCS,DBLP:journals/corr/AfratiDK0U15a} cannot be implemented for reducing the communication cost in the case of geographically distributed MapReduce~\cite{hmr,ghadoop} and for preserving data privacy. Moreover, all the models require data movement from the location of data to the location of computations and between the map and reduce phases irrespective of their involvement in outputs and the number of iterations of MapReduce jobs, resulting in a huge communication cost. In this paper, we explore hashing and filtering techniques for reducing the communication cost and try to move only relevant data to the site of mappers-reducers. The ability to use these algorithms depends on the capability of the platform, but many systems today, such as Pregel, Spark, or recent implementations of MapReduce offer the necessary capabilities.

\noindent \textbf{Bin-packing-based approximation algorithms.} We will use bin-packing-based approximation algorithms~\cite{DBLP:journals/corr/AfratiDK0U15a} for assigning inputs at reducers. These algorithms are designed for two classes of problems, which cover lots of MapReduce-based problems. The first problem deals with assigning each pair of inputs to at least one reducer in common, without exceeding $q$. The second problem deals with assign a pair of inputs to at least one reducer in common, without exceeding $q$, where two inputs belong to two different sets of inputs. The bin-packing-based algorithm works as follows:  use a known bin packing algorithm to place given $m$ inputs of size at most $\frac{q}{k}$ to bins of size $\frac{q}{k}$. Assume that we need $x$ bins to place $m$ inputs. Now, each of these bins is considered as a single input of size $\frac{q}{k}$ for our problem, and assign pair of these bins to reducers.

%%=====================================================================================================================================================
%%=====================================================================================================================================================
%% Section 2         Section 2       Section 2       Section 2       Section 2       Section 2       Section 2       Section 2       Section 2
%%=====================================================================================================================================================
%%=====================================================================================================================================================

\section{The System Setting}
\label{section:the_system_setting}
The system setting is an extension of the standard setting~\cite{DBLP:journals/corr/AfratiDK0U15a}, where we consider, for the first time, the locations of data and mappers-reducers and the communication cost. The setting is suitable for a variety of problems where \emph{at least} two inputs are required to produce an output. In order to produce an output, we need to define the term \emph{mapping schema}, as follows:

\parskip 0pt
\setlength{\parindent}{15pt}

\medskip\noindent \textbf{Mapping Schema.} A mapping schema is an assignment of the set of inputs (\textit{i}.\textit{e}., outputs of the map phase) to some given reducers under the following two constraints:

\begin{itemize}[noitemsep,nolistsep,leftmargin=0.5cm]
  \item A reducer is assigned inputs whose sum of the sizes is less than or equal to the reducer capacity $q$.
  \item For each output produced by reducers, we must assign the corresponding inputs to at least one reducer in common.
\end{itemize}
For example, a mapping schema for equijoin example will assign all tuples (of relations $X$ and $Y$) having an identical key to a reducer such that the size of assigned tuples is not more than $q$.

\medskip \noindent \textbf{The Model.} The model is simple but powerful and assumes the following:

\begin{enumerate}[noitemsep,nolistsep,leftmargin=0.5cm]
  \item Existence of systems such as Spark, Pregel, or modern Hadoop.

  \item A preliminary step at the user site who owns the dataset for finding metadata\footnote{The selection of metadata depends on the problem.} that has smaller memory size than the original data.

  \item Approximation algorithms (given in~\cite{DBLP:journals/corr/AfratiDK0U15a}), which are based on a bin-packing algorithm, at the cloud or the global reducer in case of Hierarchical MapReduce~\cite{hmr}. The approximation algorithms assign outputs of the map phase to reducers while regarding the reducer capacity. Particularly, in our case, approximation algorithms will assign metadata to reducers in such a manner that the size of actual data at a reducer will not exceed than the reducer capacity and all the inputs that are required to produce outputs must be assign at one reducer in common.
\end{enumerate}

%\medskip \noindent \textbf{Tradeoff.} There is a tradeoff between the amount of data (original data or metadata) and the total number of read operations at the reduce phase. (Consider that the transfer of either original data or metadata takes a unit time from the map phase to the reduce phase.) In the case of the small amount of data, \textit{i}.\textit{e}., metadata, reducers are required to read metadata and then (required) original input data, \textit{i}.\textit{e}., reducers do at least two-times read operations. On the other hand, in the case of original data, reducers are assigned the whole data, which increase the communication cost; however, reducers perform only one-time read operation.

\smallskip In the next section, we present a new algorithmic technique for MapReduce algorithms, where we try to minimize the communication cost regarding different locations of data and mappers-reducers with the help of a running example of equijoin.

\section{Meta-MapReduce}
\label{section:meta-mapreduce}
We present our algorithmic technique that reduces the communication cost for a variety of problems, \textit{e}.\textit{g}., join of relations, $k$-nearest-neighbors finding, similarity-search, and matrix multiplication. The proposed technique regards locality of data, the number of iterations involved in a MapReduce job, and the reducer capacity. The idea behind the proposed technique is to process metadata at mappers and reducers, and process the original required data at required iterations of a MapReduce job at reducers. In this manner, we suggest to process metadata at mappers and reducers at all the iterations of a MapReduce job. Therefore, the proposed technique is called \textit{Meta-MapReduce}.

\parskip 0pt
\setlength{\parindent}{15pt}

Before going into detail of Meta-MapReduce, we need to redefine the communication cost to takes into account the size of the metadata, the amount of the (required) original data, which is required to transfer to reducers only at required iterations, and different locations of data and computations.

\medskip \noindent \textbf{The communication cost for metadata and data.} In the context of Meta-MapReduce, the communication cost is the sum of the following:
\begin{description}[nolistsep,noitemsep]
  \item[Metadata cost] The amount of metadata that is required to move from the location of users to the location of mappers (if the locations of data and mappers are different) and from the map phase to the reduce phase in each iteration of MapReduce job.
  \item[Data cost] The amount of required original data that is needed to move to reducers at required iterations of a MapReduce job.
\end{description}

In the next Section~\ref{subsec:Meta-MapReduce framework}, we explain the way Meta-MapReduce works, using an example of equijoin for a case of different locations of data and mappers. Following the example of equijoin, we also show how much communication cost is reduced due to the use of Mata-MapReduce.% and present a tradeoff.

%\medskip \noindent \textbf{Meta-MapReduce framework (when data is at the site of user).}
\subsection{Meta-MapReduce Working}
\label{subsec:Meta-MapReduce framework}
\begin{figure*}[!t]
\centering
\includegraphics[scale=0.5]{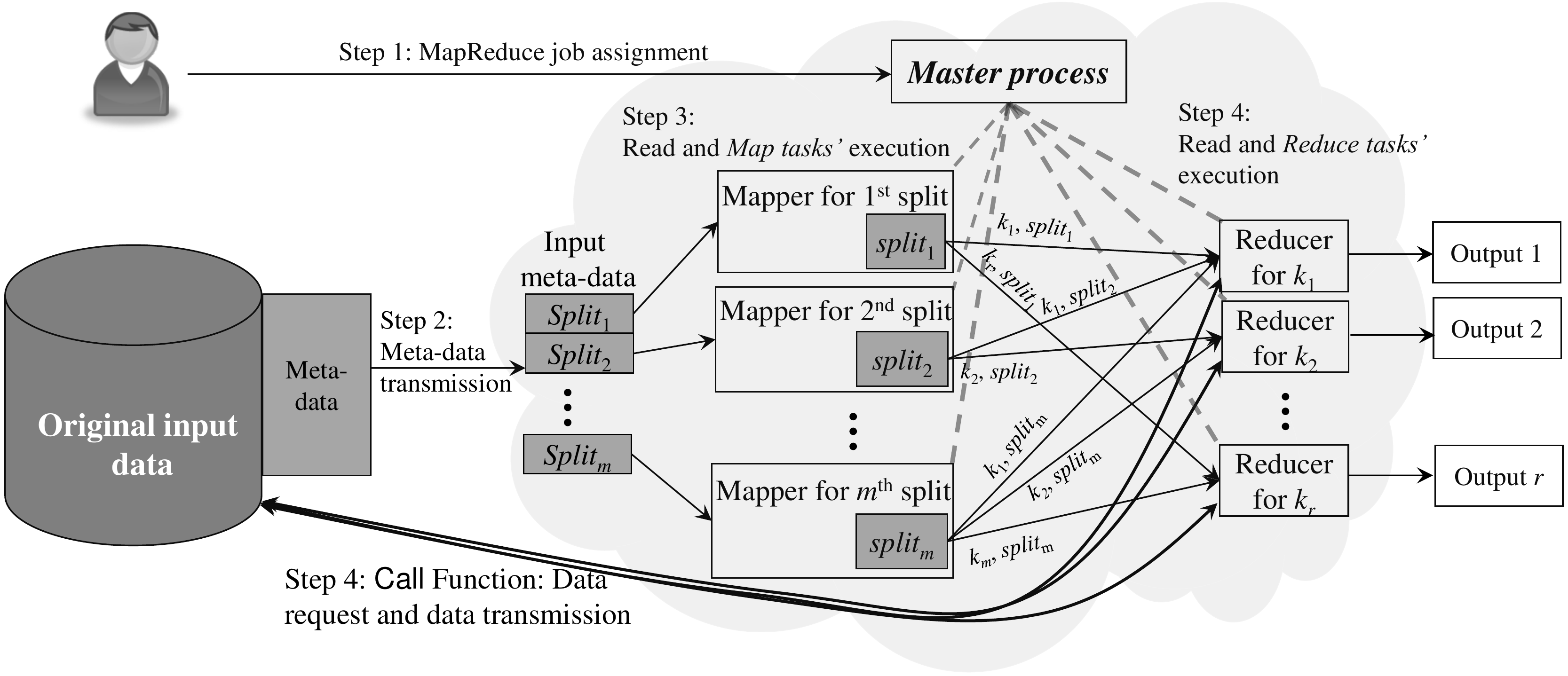}
\caption{Meta-MapReduce algorithmic approach.}
\label{fig:meta-mapreduce_cloud}
\end{figure*}

In the standard MapReduce, users send their data to the site of the mappers before the computation begins. However, in Meta-MapReduce, users send metadata to the site of mappers, instead of original data, see Figure~\ref{fig:meta-mapreduce_cloud}. Now, mappers and reducers work on metadata, and at required iterations of a MapReduce job, reducers \textit{call} required original data from the site of users (according to assigned $\langle \mathit{key, value}\rangle$ pairs) and provide the desired result. We present a detailed execution to demonstrate Meta-MapReduce (see Figure~\ref{fig:meta-mapreduce_cloud}), using the equijoin task, as follows:

\parskip 0pt
\setlength{\parindent}{15pt}

\begin{description}
  \item [\textsc{Step} 1] Users create a \emph{master process} that creates \emph{map tasks} and \emph{reduce tasks} at different compute nodes. A compute node that processes the \emph{map task} is called a \emph{map worker}, and a compute node that processes the \emph{reducer task} is called a \emph{reduce worker}.

  \item [\textsc{Step} 2] Users send metadata, which varies according to an assigned MapReduce job, to the site of mappers. Also, the user creates an index, which varies according to the assigned job, on the entire database.

  For example, in the case of equijoin (see Figure~\ref{fig:join_mr}), a user sends metadata for each of the tuples of the relations $X(A,B)$ and $Y(B,C)$ to the site of mappers. In this example, metadata for a tuple $i$ ($\langle a_i,b_i\rangle$, where $a_i$ and $b_i$ are values of the attributes $A$ and $B$, respectively) of the relation $X$ includes the size of all non-joining values (\textit{i}.\textit{e}., $|a_i|$\footnote{The notation $|a_i|$ refers to the size of an input $a_i$.}) and the value of $b_i$. Similarly, metadata for a tuple $i$ ($\langle b_i,c_i\rangle$, where $b_i$ and $c_i$ are values of the attributes $B$ and $C$, respectively) of the relation $Y$ includes the size of all non-joining values (\textit{i}.\textit{e}., $|c_i|$) with $b_i$ (remember that the size of $b_i$ is much smaller than the size of $a_i$ or $c_i$). In addition, the user creates an index on the attribute $B$ of both the relations $X$ and $Y$.

  \item [\textsc{Step} 3] In the map phase, a mapper processes an assigned input and provides some number of $\langle \mathit{key, value}\rangle$ pairs, which are known as \textit{intermediate outputs}, a $\mathit{value}$ is the \textit{size} of the corresponding input data (which is included in metadata). The \emph{master process} is then informed of the location of intermediate outputs.

      For example, in case of equijoin, a mapper takes a single tuple $i$ (\textit{e}.\textit{g}., $\langle |a_i|,b_i\rangle$) and generates some $\langle b_i, \mathit{value}\rangle$ pairs, where $b_i$ is a key and a $\mathit{value}$ is the \textit{size} of tuple $i$ (\textit{i}.\textit{e}., $|a_i|$). Note that in the original equijoin example, a $\mathit{value}$ is the whole data associated with the tuple $i$ (\textit{i}.\textit{e}., $a_i$).

  \item [\textsc{Step} 4] The \emph{master process} assigns \emph{reduce tasks} (by following a mapping schema as suggested in~\cite{DBLP:journals/corr/AfratiDK0U15a}) and provides information of intermediate outputs, which serve as inputs to \emph{reduce tasks}. A reducer is then assigned all the $\langle \mathit{key, value}\rangle$ pairs having an identical key by following a mapping schema for an assigned job. Now, reducers perform the computation and \texttt{call}\footnote{The \texttt{call} operation will be explained in Section~\ref{subsec:the call function}.} only required data if there is only one iteration of a MapReduce job. On the other hand, if a MapReduce job involves more than one iteration, then reducers \texttt{call} original required data at required iterations of the job (we will discuss multi-round MapReduce jobs using Meta-MapReduce in Section~\ref{subsec:Multi-round Iterations}).

     For example, in case of equijoin, a reducer receives all the $\langle \mathit{b_i, value}\rangle$ pairs from both the relations $X$ and $Y$, where a $\mathit{value}$ is the size of tuple associated with key $b_i$. Inputs (\textit{i}.\textit{e}., intermediate outputs of the map phase) are assigned to reducers by following a mapping schema for equijoin such that a reducer does not assign more original inputs than its capacity, and after that reducers invoke the \texttt{call} operation. Note that a reducer that receives at least one tuple with key $b_i$ from both the relations $X$ and $Y$ produces outputs and requires original input data from the user's site. However, if the reducer receives tuples with key $b_i$ from a single relation only, the reducer does not request for the original input tuple, since these tuples do not participate in the final output.
\end{description}
Following Meta-MapReduce, we now compute the communication cost involved in equijoin example (see Figure~\ref{fig:join_mr}). Recall that without using Meta-MapReduce, a solution to equijoin problem (in Figure~\ref{fig:join_mr}) requires 12 units communication cost. However, using Meta-MapReduce for performing equijoin, there is no need to send the tuple $\langle a_3,b_2\rangle$ of the relation $X$ and the tuples $\langle b_3,c_3\rangle$ of the relation $Y$ to the location of computation. Moreover, we send metadata of all the tuples to the site of mappers, and intermediate outputs containing metadata are transferred to the reduce phase, where reducers \texttt{call} only desired tuples having $b_1$ value from the user's site. Consequently, a solution to the problem of equijoin has only 4 units cost plus a constant cost for moving metadata using Meta-MapReduce, instead of 12 units communication cost.

%\noindent\textbf{Explanation of tradeoff.} We explain the tradeoff using equijoin example. In the case of original input data, users have to move data from their location to the location of mappers, and then, intermediate outputs are transferred from the map phase to reduce phase; such data transfer shows a high amount of data movement twice; however, reducers perform only one-time read operation for providing outputs. On the other hand, using Meta-MapReduce, we can obtain outputs by transferring less amount of data (\textit{i}.\textit{e}., metadata of each tuple), however, reducers perform two-times read operation to read metadata, and then, assigned original inputs.

\begin{theorem}[The communication cost]
\label{th:communication cost for two relations}
Using Meta-MapReduce, the communication cost for the problem of join of two relations is at most $2nc+h(c+w)$ bits, where $n$ is the number of tuples in each relation, $c$ is the maximum size of a value of the joining attribute, $h$ is the number of tuples that actually join, and $w$ is the maximum required memory for a tuple.
\end{theorem}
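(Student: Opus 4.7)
The plan is to decompose the total communication cost into the two components defined for Meta-MapReduce, namely the metadata cost and the data cost, and to bound each separately.

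For the metadata cost, I would first observe that, by the Meta-MapReduce construction in Step 2 and Step 3, the metadata of each tuple of $X(A,B)$ (respectively $Y(B,C)$) consists of the joining attribute value $b_i$ (of size at most $c$) together with a constant-sized size descriptor $|a_i|$ (respectively $|c_i|$). Since each of the two relations contains $n$ tuples, the total number of bits needed to move metadata from the user's site to the mappers, and then as $\langle \mathit{key}, \mathit{value}\rangle$ pairs from the map phase to the reduce phase, is bounded by $2nc$ (the size descriptors contributing only lower-order terms that are absorbed into the constant).

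For the data cost, I would argue as follows. Since the mapping schema places all metadata records sharing a given $B$-value at a common reducer, each reducer can decide purely from the received $(b_i,\text{size})$ pairs whether its key $b_i$ appears in both relations; if not, no original tuple is ever called for that key. Therefore the \texttt{call} operation is invoked only on tuples that actually participate in a join, of which by hypothesis there are exactly $h$. Each such \texttt{call} consists of sending a key (of size at most $c$) to the user site and receiving back the full tuple (of size at most $w$), contributing $c+w$ bits per tuple. Hence the total data cost is at most $h(c+w)$.

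Summing the two bounds gives the required $2nc+h(c+w)$. The main delicate point is the argument in the data-cost step that no tuple outside the joining set is ever fetched: this rests on the completeness of the metadata at each reducer and on the correctness of the mapping schema in co-locating all tuples with a common key; once this is established, the rest is just an accounting of bits per tuple, per direction of transfer.
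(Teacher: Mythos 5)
Your overall strategy---splitting the bound into the metadata cost and the data cost---matches the paper's, and you arrive at the same expression, but the way you justify the two terms does not add up, and it differs from the paper's accounting in a way that hides a gap. You claim that $2nc$ covers the metadata both on the leg from the user's site to the mappers \emph{and} on the leg from the map phase to the reduce phase. Each of the $2n$ tuples contributes a key of size up to $c$, so a single leg already costs up to $2nc$ bits; two legs cost up to $4nc$, and that factor of two cannot be absorbed as a ``lower-order term'' the way the constant-sized size descriptors can. The paper avoids this by charging $2nc$ only for the upload from the user's site to the mappers, and then charging the map-to-reduce leg only for the $h$ tuples that actually join---that is where its $hc$ term comes from.

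You instead spend the $hc$ budget on a transfer the paper never counts: the keys sent from the reducers back to the user's site when the \texttt{call} is issued. So in your ledger the map-to-reduce movement of metadata is either uncounted or silently folded into a $2nc$ term that is too small to contain it. The two discrepancies cancel numerically---you still land on $2nc+h(c+w)$---but the derivation is not sound as written. To repair it along the paper's lines: charge $2nc$ for the user-to-mappers upload of all metadata, $hc$ for the map-to-reduce transfer of the metadata of the $h$ joining tuples, and $hw$ for fetching the $h$ original tuples from the user's site to the reducers. Your observation that only joining tuples are ever \texttt{call}ed (because each reducer sees, for its key, metadata from both relations and can decide locally) is correct and is exactly what justifies restricting the last two terms to $h$ tuples; it just should not be accompanied by an extra $hc$ of request traffic in place of the map-to-reduce metadata cost.
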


\begin{proof}
Since the maximum size of a value of the joining attribute, which works as a metadata in the problem of join, is $c$ and there are $n$ tuples in each relation, users have to send at most $2nc$ bits to the site of mappers-reducers. Further, tuples that join at the reduce phase have to be transferred from the map phase to the reduce phase and then from the user's site to the reduce phase. Since there are at most $h$ tuples join and the maximum size of a tuple is $w$, we need to transfer at most $hc$ and at most $hw$ bits from the map phase to the reduce phase and from the user's site to the reduce phase, respectively. Hence, the communication cost is at most $2nc+h(c+w)$ bits.
\end{proof}

\textbf{Further significant improvement.} We note that it is possible to further decrease the communication cost by using two iterations of a MapReduce job, in which the first iteration is performed on metadata and the second iteration is performed on the required original data. Specifically, in the first iteration, a user sends metadata to some reducers such that reducers are not assigned \textit{more metadata} than capacity, and all these reducers will compute the required original data and the optimal number of reducers for a task. Afterward, a new MapReduce iteration is executed on the required original data such that a reducer is not assigned \textit{more original inputs} than its capacity. In this manner, we save replication of metadata and some reducers that do not produce outputs.

\begin{table*}[t]
\begin{center}
\bgroup
\def\arraystretch{1.5}
    \centering

    \begin{tabular}{|l|l|l|l|l|}
    \hline
   \multirow{2}{6cm}{Problems} & \multirow{2}{*}{Section} & \multirow{2}{*}{Theorem} & \multicolumn{2}{ c| }{Communication cost}  \\

   \hhline{~~~|-|-|}  & {~} & {~}  & \textcolor[rgb]{0.00,0.50,0.00}{using Meta-MapReduce} & \textcolor[rgb]{1.00,0.00,0.00}{using MapReduce}  \\ \hline\hline

    Join of two relations & ~\ref{subsec:Meta-MapReduce framework} & ~\ref{th:communication cost for two relations} & $\textcolor[rgb]{0.00,0.50,0.00}{2nc+h(c+w)}$ & $\textcolor[rgb]{1.00,0.00,0.00}{4nw}$ \\ \hline

    Skewed Values of the Joining Attribute & ~\ref{subsec:Meta-MapReduce for Skewed Values of the Joining Attribute} & ~\ref{th:communication cost for skew join} & $\textcolor[rgb]{0.00,0.50,0.00}{2nc + rh(c + w)}$ & $\textcolor[rgb]{1.00,0.00,0.00}{2nw(1+r)}$\\ \hline

    Join of two relations by hashing the joining attribute  & ~\ref{subsec:Large Size of Joining Values} & ~\ref{th:communication cost heavy size of joining value} & $\textcolor[rgb]{0.00,0.50,0.00}{6n\cdot log\ m+h(c+w)}$ & $\textcolor[rgb]{1.00,0.00,0.00}{4nw}$\\ \hline

    Join of $k$ relations by hashing the joining attributes & ~\ref{subsec:Multi-round Iterations} & ~\ref{th:communication cost multi round} & $\textcolor[rgb]{0.00,0.50,0.00}{3knp\cdot log\ m +h(c+w)}$ & $\textcolor[rgb]{1.00,0.00,0.00}{2knw}$\\ \hline\hline
    \multicolumn{5}{|p{\linewidth}|}{$n$: the number of tuples in each relation, $c$: the maximum size of a value of the joining attribute, $r$: the replication rate, $h$: the number of tuples that actually join, $w$ is the maximum required memory for a tuple, $p$: the maximum number of dominating attributes in a relation, and $m$: the maximal number of tuples in all given relations.}\\\hline

    \end{tabular}
    \egroup
   \B
\caption{The communication cost for joining of relations using Meta-MapReduce.}
\label{table:The communication cost for joining of relations using Meta-MapReduce}
\BBB
\end{center}
\end{table*}

\subsection{The \texttt{Call} Function}
\label{subsec:the call function}
In this section, we will describe the \texttt{call} function that is invoked by reducers to have the original required inputs from the user's site to produce outputs.

\parskip 0pt
\setlength{\parindent}{15pt}

All the reducers that produce outputs require the original inputs from the site of users. Reducers can know whether they produce outputs or not, after receiving intermediate outputs from the map phase, and then, inform the corresponding mappers from where they have fetched these intermediate outputs (for simplicity, we can say all reducers that will produce outputs send 1 to all the corresponding mappers to request the original inputs, otherwise send 0). Mappers collect requests for the original inputs from all the reducers and fetch the original inputs, if required, from the user's site by accessing the index file. Remember that in Meta-MapReduce, the user creates an index on the entire database according to an assigned job, refer to \textsc{Step} 2 in Section~\ref{subsec:Meta-MapReduce framework}. This index helps to access required data that reducers want without doing a scan operation. Note that the \texttt{call} function can be easily implemented on recent implementations of MapReduce, \textit{e}.\textit{g}., Pregel and Spark.

For example, we can consider our running example of equijoin. In case of equijoin, a reducer that receives at least one tuple with key $b_i$ from both the relations $X(A,B)$ and $Y(B,C)$ requires the original input from the user's site, and hence, the reducer sends 1 to the corresponding mappers. However, if the reducer receives tuples with key $b_i$ from a single relation only, the reducer sends 0. Consider that the reducer receives $\langle b_i, |a_i|\rangle$ of the relation $X$ and $\langle b_i, |c_i|\rangle$ of the relation $Y$. The reducer sends 1 to corresponding mappers that produced $\langle b_i, |a_i|\rangle$ and $\langle b_i, |c_i|\rangle$ pairs. On receiving requests for the original inputs from the reducer, the mappers access the index file to fetch $a_i$, $b_i$, and $c_i$, and then, the mapper provides $a_i$, $b_i$, and $c_i$ to the reducer.

\subsection{Meta-MapReduce for Skewed Values of the Joining Attribute}
\label{subsec:Meta-MapReduce for Skewed Values of the Joining Attribute}
%\medskip\medskip\noindent\textbf{Meta-MapReduce for skewed values of joining attribute.}
Consider two relations $X(A,B)$ and $Y(B,C)$, where the joining attribute is $B$ and the size of all the $B$ values is very small as compared to the size of values of the attributes $A$ and $C$. One or both of the relations $X$ and $Y$ may have a large number of tuples with an identical $B$-value. A value of the joining attribute $B$ that occurs many times is known as a {\em heavy hitter}. In skew join of $X(A,B)$ and $Y(B,C)$, all the tuples of both the relations with an identical heavy hitter should appear together to provide the output tuples.

\parskip 0pt
\setlength{\parindent}{15pt}

In Figure~\ref{fig:twowayjoin}, $b_1$ is a heavy hitter; hence, it is required that all the tuples of $X(A,B)$ and $Y(B,C)$ with the heavy hitter, $b_1$, should appear together to provide the output tuples, $\langle a, b_1, c\rangle$ ($a\in A, b_1\in B, c\in C$), which depend on exactly two inputs. However, due to a single reducer --- for joining all tuples with a heavy hitter --- there is no parallelism at the reduce phase, and a single reducer takes a long time to produce all the output tuples of the heavy hitter.

\begin{figure}[!h]
\begin{center}
\includegraphics[width=65mm, height=35mm]{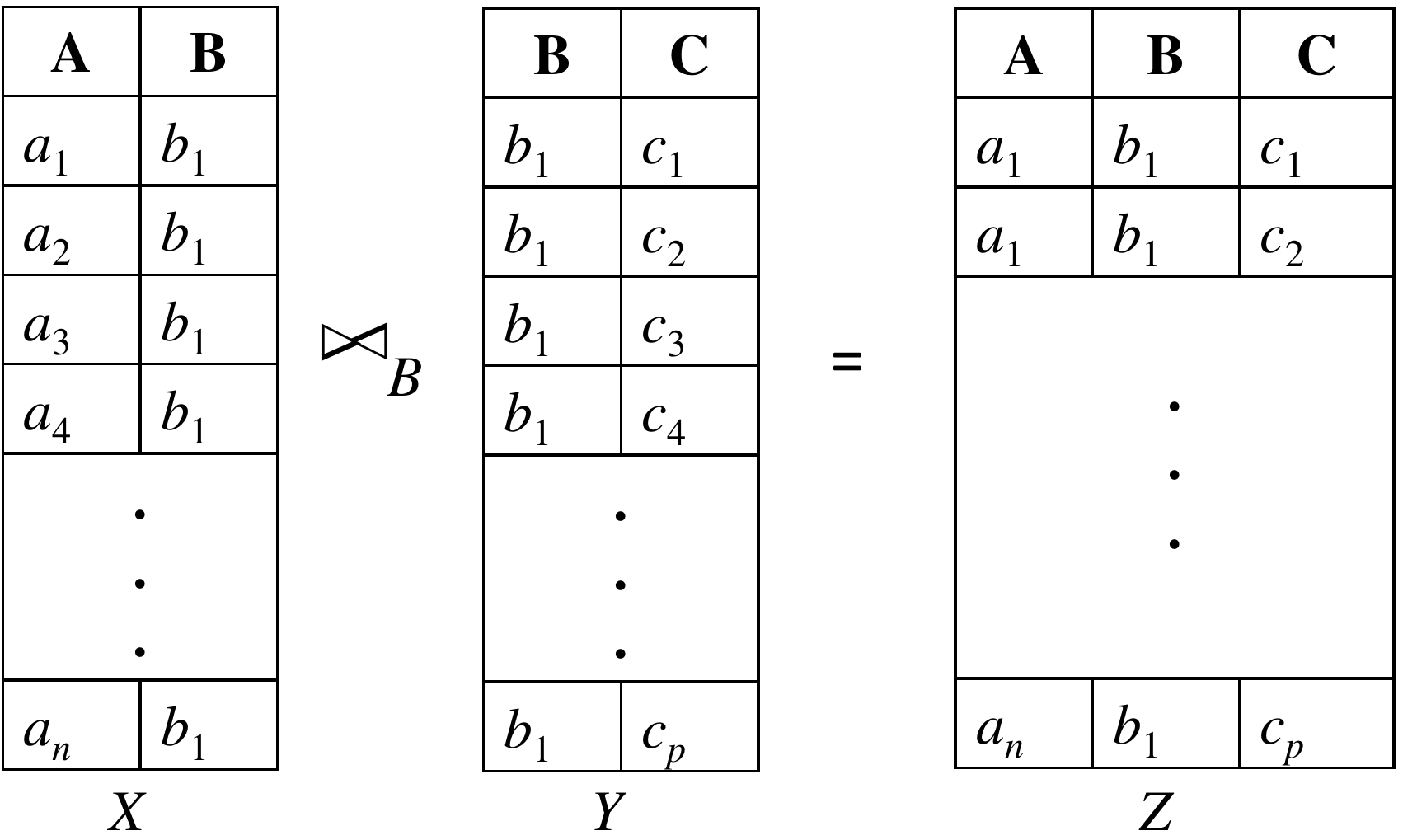}
\end{center}
\caption{Skew join example for a heavy hitter, $b_1$.}
\label{fig:twowayjoin}
\end{figure}
We can restrict reducers in a way that they can hold many tuples, but not all the tuples with the heavy-hitter-value. In this case, we can reduce the time and use more reducers, which results in a higher level of parallelism at the reduce phase. But, there is a higher communication cost, since each tuple with the heavy hitter must be sent to more than one reducer. (Details about join algorithms may be found in~\cite{DBLP:journals/tkde/AfratiU11}.)

We can solve the problem of skew join using Meta-MapReduce, using four steps suggested in Section~\ref{subsec:Meta-MapReduce framework}.

\begin{theorem}[The communication cost]
\label{th:communication cost for skew join}
Using Meta-MapReduce, the communication cost for the problem of skew join of two relations is at most $2nc+rh(c+w)$ bits, where $n$ is the number of tuples in each relation, $c$ is the maximum size of a value of the joining attribute, $r$ is the replication rate,\footnote{The replication rate~\cite{DBLP:journals/corr/abs-1206-4377} of a mapping schema is the average number of key-value pairs for each input.} $h$ is the number of distinct tuples that actually join, and $w$ is the maximum required memory for a tuple.
\end{theorem}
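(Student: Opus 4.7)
The plan is to mirror the accounting used in Theorem~\ref{th:communication cost for two relations}, but insert the replication rate $r$ at exactly the right places. Observe that Meta-MapReduce, applied to skew join via the four steps of Section~\ref{subsec:Meta-MapReduce framework}, incurs three conceptually separate transfers: (i) metadata from the user's site to the mappers, (ii) intermediate (metadata) key-value pairs from the map phase to the reduce phase, and (iii) original tuples fetched by the \texttt{call} function from the user's site to the reduce phase. I would bound each contribution in turn and sum them.

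\textbf{Step 1 (user-to-mapper metadata).} The metadata of a tuple of $X(A,B)$ consists of the joining attribute value $b_i$ together with the size marker $|a_i|$ (and analogously for $Y(B,C)$); the dominant term is $|b_i|\le c$, where $c$ is the maximum size of a value of the joining attribute. Since each relation has $n$ tuples, this phase contributes at most $2nc$ bits, exactly as in Theorem~\ref{th:communication cost for two relations}. This part is unchanged by skew, because metadata is shipped once from the user to the mappers regardless of how it is later replicated among reducers.

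\textbf{Step 2 (map-to-reduce metadata) and Step 3 (\texttt{call}ed data).} Here the skew bites. Only tuples whose joining value appears in \emph{both} relations can contribute to output, so only the $h$ distinct joining tuples need to be forwarded beyond the mappers; the others are silently filtered. However, under a skew-aware mapping schema that partitions heavy-hitter groups across several reducers to restore parallelism, each joining tuple is emitted, on average, to $r$ reducers (the replication rate by definition). Thus the metadata key-value pairs travelling from mappers to reducers total at most $rhc$ bits, and the subsequent \texttt{call}-based fetch of the original tuples from the user's site to the reducers totals at most $rhw$ bits, where $w$ bounds the memory footprint of a tuple. Combining, this phase contributes $rh(c+w)$ bits.

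\textbf{Step 3 (sum).} Adding the two contributions yields the claimed $2nc + rh(c+w)$. The main obstacle I anticipate is precisely the justification in Step~2 that the replication factor should multiply $h$ and not $n$: it must be argued explicitly that metadata for a non-joining value is received by the reducers but then discarded (the reducer sends $0$ in the \texttt{call} protocol, so no original data is ever fetched for it), and that although the intermediate emission to reducers technically replicates every tuple's metadata, the dominant \emph{data} cost $rhw$ is paid only for joining tuples. A careful version of the proof should also note that absorbing the non-joining metadata into the $2nc$ bound (rather than into $rhc$) is legitimate because metadata replication is dwarfed by the user-to-mapper cost under the assumption $c \ll w$ that motivates the whole technique.
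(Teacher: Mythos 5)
Your proof is correct and follows essentially the same decomposition as the paper's: $2nc$ bits of metadata from the user's site to the mappers, $rhc$ bits of replicated metadata for the $h$ joining tuples from map to reduce, and $rhw$ bits for the \texttt{call}ed original tuples, summing to $2nc+rh(c+w)$. The concern you raise at the end --- that the intermediate metadata of \emph{non-joining} tuples also travels (replicated) to the reduce phase and is not obviously covered by the $rhc$ term --- is a real looseness present in the paper's own proof as well, which likewise counts only the $h$ joining tuples in the map-to-reduce transfer.
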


\begin{proof}
From the user's site to the site of mappers-reducers, at most $2nc$ bits are required to move (according to Theorem~\ref{th:communication cost for two relations}). Since at most $h$ distinct tuples join and these tuples are replicated to $r$ reducers, at most $rhc$ bits are required to transfer from the map phase to the reduce phase. Further, $h$ tuples of size at most $w$ to be transferred from the map phase to the reduce phase, and hence, at most $rhw$ bits are assigned to reducers. Thus, the communication cost is at most $2nc+rh(c+w)$ bits.
\end{proof}

\subsection{Meta-MapReduce for an Identical Location of Data and Mappers}
%\label{subsec:Meta-MapReduce framework when data is at the site of mappers}
%\medskip\medskip \noindent\textbf{Meta-MapReduce for an identical location of data and mappers.}
We explained the way Meta-MapReduce acts in the case of different locations of data and computation, and show how it provides desired outputs by considering only metadata of inputs. Nevertheless, Meta-MapReduce also decreases the amount of data to be transferred when the locations of data and computations are identical. In this case, mappers process only metadata of assigned inputs instead of the original inputs as in MapReduce, and provide $\langle \mathit{key, value}\rangle$ pairs, where a $\mathit{value}$ is the size of an assigned input, not the original input itself. A reducer processes all the $\langle \mathit{key, value}\rangle$ pairs having an identical key and calls the original input data, if required. Consequently, there is no need to send all those inputs that do not participate in the final result from the map phase to the reduce phase.

\parskip 0pt
\setlength{\parindent}{15pt}

For example, in case of equijoin, if the location of mappers and relations $X$ and $Y$ are identical, then a mapper processes a tuple of either relation and provides $\langle \mathit{b_i, |a_i|}\rangle$ or $\langle \mathit{b_i, |c_i|}\rangle$ as outputs. A reducer is assigned all the inputs having an identical key, and the reducer calls the original inputs if it has received inputs from both the relations.

\section{Extensions of Meta-MapReduce}
\label{sec:Extensions to Meta-MapReduce}
We have presented Meta-MapReduce framework for different and identical locations of data and mappers-reducers. However, some extensions are required to use Meta-MapReduce for geographically distributed data processing, for handling large size of values of joining attributes, and for handling multi-round iterations. In this section, we will provide three extensions of Meta-MapReduce.

%\medskip\medskip \noindent\textbf{Incorporating Meta-MapReduce in G-Hadoop and Hierarchical MapReduce.}
\subsection{Incorporating Meta-MapReduce in G-Hadoop and Hierarchical MapReduce}
\label{subsec:Incorporating Meta-MapReduce in G-Hadoop and Hierarchical MapReduce}
G-Hadoop and Hierarchical MapReduce are two implementations for geographically distributed data processing using MapReduce. Both the implementations assume that a cluster processes data using MapReduce and provides its outputs to one of the clusters that provides final outputs (by executing a MapReduce job on the received outputs of all the clusters). However, the transmission of outputs of all the clusters to a single cluster for producing the final output is not efficient, if all the outputs of a cluster do not participate in the final output.

\parskip 0pt
\setlength{\parindent}{15pt}

We can apply Meta-MapReduce idea to systems such as G-Hadoop and Hierarchical MapReduce. Note that we \textit{do not} change basic functionality of both the implementations. We take our running example of equijoin (see Figure~\ref{fig:Three clusters each with two relations}, where we have three clusters, possibly on three continents, the first cluster has two relations $U(A,B)$ and $V(B,C)$, the second cluster has two relations $W(D,B)$ and $X(B,E)$, and the third cluster has two relations $Y(F,B)$ and $Z(B,G)$) and assume that data exist at the site of mappers in each cluster. In the final output, reducers perform the join operation over all the six relations, which share an identical $B$-value.

\begin{figure}[!h]
\begin{center}
\includegraphics[scale=0.37]{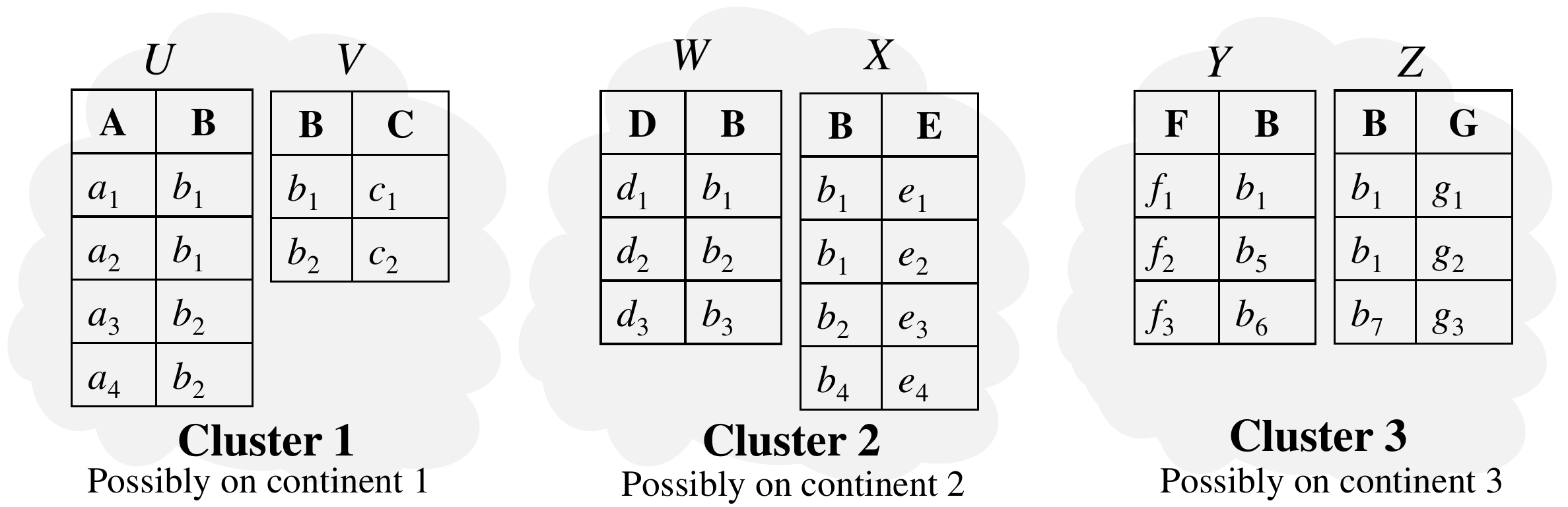}
\end{center}
\caption{Three clusters, each with two relations.}
\label{fig:Three clusters each with two relations}
\end{figure}

The following three steps are required for obtaining final outputs using an execution of Meta-MapReduce over G-Hadoop and Hierarchical MapReduce.
\begin{description}[nolistsep,noitemsep]
  \item[\textsc{Step} 1] Mappers at each cluster process input data according to an assigned job and provide $\langle \mathit{key, value} \rangle$ pairs, where a $\mathit{value}$ is the size of an assigned input.

For example, in Figure~\ref{fig:Three clusters each with two relations}, a mapper at Cluster 1 provides outputs of the form of $\langle \mathit{b_i, |a_i|} \rangle$ or $\langle \mathit{b_i, |c_i|} \rangle$.

%      For example, in Figure~\ref{fig:Three clusters each with two relations}, a mapper at Cluster 1 processes a tuple $i$ ($\langle a_i,b_i\rangle$) of the relation $U$ or a tuple $i$ ($\langle b_i,c_i \rangle$) of the relation $V$ and provides outputs of the form of $\langle \mathit{b_i, |a_i|} \rangle$ or $\langle \mathit{b_i, |c_i|} \rangle$.

  \item[\textsc{Step} 2] Reducers at each cluster provide partial outputs by following an assigned mapping schema, and partial outputs, which contain only metadata, are transferred to one of the clusters, which will provide final outputs.

      For example, in case of equijoin, reducers at each cluster provide partial output tuples as $\langle \mathit{|a_i|,b_i,|c_i|} \rangle$ at Cluster 1, $\langle \mathit{|d_i|,b_i,|e_i|} \rangle$ at Cluster 2, and $\langle \mathit{|f_i|,b_i,|g_i|} \rangle$ at Cluster 3 (by following a mapping schema for equijoin). Partial outputs of Cluster 1 and Cluster 3 have to be transferred to one of the clusters, say Cluster 2, for obtaining the final output.

  \item[\textsc{Step} 3] A designated cluster for providing the final output processes all the outputs of the clusters by implementing the assigned job using Meta-MapReduce. Reducers that provide the final output \texttt{call} the original input data from all the clusters.

      For example, in equijoin, after receiving outputs of Cluster 1 and Cluster 3, Cluster 2 implements two iterations for joining tuples. In the first iteration, outputs of Clusters 1 and 2 are joined (by following a mapping schema for equijoin), and in the second iteration, outputs of Clusters 3 and the output of the previous iteration are joined at reducers. A reducer in the second iteration provides the final output as $\langle \mathit{|a_i|,b_1,|c_i|,|d_i|,|e_i|,|f_i|,|g_i|} \rangle$ and \texttt{call}s all the original values of $|a_i|$, $|c_i|$, $|d_i|$, $|e_i|$, $|f_i|$, and $|g_i|$ for providing the desired output, as suggested in Section~\ref{subsec:the call function}.
\end{description}

\medskip\noindent\textbf{Communication cost analysis.} In Figure~\ref{fig:Three clusters each with two relations}, we are performing equijoin in three clusters, and assuming that data is available at the site of mappers in each cluster. In addition, we consider that each value takes two units size; hence, any tuple, for example, $\langle a_i, b_i\rangle$, has size 4 units.

First, each of the clusters performs an equijoin within the cluster using Meta-MapReduce. Note that using Meta-MapReduce, there is no need to send any tuple from the map phase to the reduce phase within the cluster, while G-Hadoop and Hierarchical MapReduce do data transfer from the map phase to the reduce phase, and hence, results in 76 units of communication cost. Moreover, in G-Hadoop and Hierarchical MapReduce, the transmission of two tuples ($\langle a_3, b_2\rangle$, $\langle a_4,b_2\rangle$) of $U$, one tuple ($\langle b_2,c_2\rangle$) of $V$, two tuples ($\langle d_2,b_2\rangle$, $\langle d_3,b_3\rangle$) of $W$, two tuples ($\langle b_2,e_3 \rangle$, $\langle b_4,e_4\rangle$) of $X$, two tuples ($\langle f_2,b_5\rangle$, $\langle f_3,b_6\rangle$) of $Y$, and one tuple ($\langle b_7,g_3\rangle$) of $Z$ from the map phase to the reduce phase is useless, since they do not participate in the final output.

After computing outputs within the cluster, metadata of outputs (\textit{i}.\textit{e}., size of tuples associated with key $b_1$ and key $b_2$) is transmitted to Cluster 2. Here, it is important to note that tuples with value $b_1$ provide final outputs. Using Meta-MapReduce, we will not send the complete tuples with value $b_2$, hence, we also decrease the communication cost; while G-Hadoop and Hierarchical MapReduce send all the outputs of the first and third clusters to the second cluster. After receiving outputs from the first and the third clusters, the second cluster performs two iterations as mentioned previously, and in the second iteration, a reducer for key $b_1$ provides the final output. Following that the communication cost is only 36 units.

On the other hand, transmission of outputs with data from the first cluster and the third cluster to the second cluster and performing two iterations result in 132 units communication cost. Therefore, G-Hadoop and Hierarchical MapReduce require 208 units communication cost while Meta-MapReduce provides the final results using 36 units communication cost.

%\medskip\medskip\noindent\textbf{Heavy size of joining values.}
\subsection{Large Size of Joining Values}
\label{subsec:Large Size of Joining Values}
We have considered that sizes of joining values are very small as compared to sizes of all the other non-joining values. For example, in Figure~\ref{fig:join_mr}, sizes of all the values of the attribute $B$ are very small as compared to all the values of the attributes $A$ and $C$. However, considering very small size of values of the joining attribute is not realistic. All the values of the joining attribute may also require a considerable amount of memory, which may be equal or greater than the sizes of non-joining values. In this case, it is not useful to send all the values of the joining attribute with metadata of non-joining attributes. Thus, we enhance Meta-MapReduce for handling a case of large size of joining values.

\parskip 0pt
\setlength{\parindent}{15pt}

We consider our running example of join of two relations $X(A,B)$ and $Y(B,C)$, where the size of each of $b_i$ is large enough such that the value of $b_i$ cannot be used as metadata. We use hash function to gain a short identifier (that is unique with high probability) for each $b_i$. We denote $H(b_i)$ to be the hash value of the original value of $b_i$. Here, Meta-MapReduce works as follows:

\begin{description} [nolistsep,noitemsep]
  \item[\textsc{Step} 1] For all the values of the joining attribute ($B$), use a hash function such that an identical $b_i$ in both of the relations has a unique hash value with a high probability, and $b_i$ and $b_j$, $i\neq j$, receive two different hash values with a high probability.

  \item[\textsc{Step} 2] For all the other non-joining attributes' values (values corresponding to the attributes $A$ and $C$), find metadata that includes size of each of the values.

  \item[\textsc{Step} 3] Perform the task using Meta-MapReduce, as follows: (\textit{i}) Users send hash values of joining attributes and metadata of the non-joining attributes. For example, a user sends hash value of $b_i$ ($H(b_i)$) and the corresponding metadata (\textit{i}.\textit{e}., size) of values $a_i$ or $c_i$ to the site of mappers. (\textit{ii}) A mapper processes an assigned tuples and provides intermediate outputs, where a $\mathit{key}$ is $H(b_i)$ and a $\mathit{value}$ is $|a_i|$ or $|c_i|$. (\textit{iii}) Reducers \texttt{call} all the $\mathit{value}$s corresponding to a key (hash value), and if a reducer receives metadata of $a_i$ and $c_i$, then the reducer calls the original input data and provides the final output.

      Note that there may be a possibility that two different values of the joining attribute have an identical hash value; hence, these two values are assigned to a reducer. However, the reducer will know these two different values, when it will \texttt{call} the corresponding data. The reducer notifies the master process, and a new hash function is used.
\end{description}

\begin{theorem}[The communication cost]
\label{th:communication cost heavy size of joining value}
Using Meta-MapReduce for the problem of join where values of joining attributes are large, the communication cost for the problem of join of two relations is at most $6n\cdot log\ m+h(c+w)$ bits, where $n$ is the number of tuples in each relation, $m$ is the maximal number of tuples in two relations, $h$ is the number of tuples that actually join, and $w$ is the maximum required memory for a tuple.
\end{theorem}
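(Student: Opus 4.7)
The plan is to mirror the structure of Theorem~\ref{th:communication cost for two relations}, but replace the cost of transmitting raw joining-attribute values with the cost of transmitting their hash images, and then argue that the fetching phase contributes the same $h(c+w)$ term as before.

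First I would account for the \textbf{metadata transmission from the user to the mappers}. In the large-joining-values variant of Meta-MapReduce (Step~1 and Step~2 above), the user replaces each joining value $b_i$ by its hash $H(b_i)$ and attaches the size of the corresponding non-joining value. For the hash image to serve as a key that (with high probability) neither collides across the two relations for distinct values nor fails to collide for identical values, one must choose a hash range polynomial in $m$; I would fix the range so that each hash image occupies $3\log m$ bits (by a union bound over the $O(m^2)$ unordered pairs of joining values, a range of size $m^3$ drives the global collision probability to $O(1/m)$). Each of the $2n$ tuples then contributes $3\log m$ bits of hash plus a size field, and summing yields the $6n\log m$ term (up to the lower-order size-of-size bits, which I would absorb or bound explicitly by $O(\log w)$ per tuple).

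Second, I would bound the \textbf{traffic between the map and reduce phases and between the user and the reducers}. Mappers emit key-value pairs whose keys are the hashes already accounted for in the previous step; at the reduce side, only those reducers that receive metadata from both relations under a common key actually produce output, and by definition at most $h$ tuples satisfy this. For each such joining tuple, the reducer \texttt{calls} the original tuple, which costs at most $w$ bits, and (to confirm the hash match is not a false positive and to emit the final joined tuple) the actual joining value of size at most $c$ must be transferred. Summed over the $h$ surviving tuples, this gives the $h(c+w)$ term, exactly as in Theorem~\ref{th:communication cost for two relations}.

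Combining the two terms yields the claimed bound $6n\log m + h(c+w)$. The main obstacle I expect is the \textbf{hash-collision argument}: I need to justify that $3\log m$-bit hashes suffice so that the protocol is correct with high probability, and separately note that when a collision is detected during the \texttt{call} step (as described in Step~3 of Section~\ref{subsec:Large Size of Joining Values}) the master process rehashes, so collisions affect communication cost only by a vanishing expected factor and do not inflate the worst-case bit count under the stated high-probability guarantee. Everything else reduces to reusing the accounting of Theorem~\ref{th:communication cost for two relations} with $c$ replaced by $3\log m$ in the metadata term.
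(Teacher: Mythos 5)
Your proposal follows essentially the same route as the paper's proof: bound the metadata cost by hashing the $m \le 2n$ joining values into a range of size $m^3$ so that each hash occupies $3\log m$ bits (giving $6n\log m$ over the $2n$ tuples), and then charge $hc$ for map-to-reduce traffic and $hw$ for fetching the original tuples of the $h$ actually-joining pairs, exactly as in Theorem~\ref{th:communication cost for two relations}. Your added union-bound justification for the $m^3$ hash range and the remark about rehashing on detected collisions are reasonable elaborations of steps the paper asserts without detail, but they do not change the argument.
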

\begin{proof}
The maximal number of tuples having different values of a joining attribute in all relations is $m$, which is upper bounded by $2n$; hence, a mapping of hash function of $m$ values into $m^3$ values will result in a unique hash value for every of the $m$ keys with a high probability. Thus, we use at most $3\cdot log\ m$ bits for metadata of a single value, and hence, at most $6n\cdot log\ m$ bits are required to move metadata from the user's site to the site of mappers-reducers. Since there are at most $h$ tuples join and the maximum size of a tuple is $w$, we need to transfer at most $hc$ and at most $hw$ bits from the map phase to the reduce phase and from the user's site to the reduce phase, respectively. Hence, the communication cost is at most $6n\cdot log\ m+h(c+w)$ bits.
\end{proof}

%\medskip\medskip\noindent\textbf{Multiround iterations.}
\subsection{Multi-round Iterations}
\label{subsec:Multi-round Iterations}
We show how Meta-MapReduce can be incorporated in a multi-round MapReduce job, where values of joining attributes are also large as the value of non-joining attributes. In order to explain, the working of Meta-MapReduce in a multi-iterative MapReduce job, we consider an example of join of four relations $U(A,B,C,D)$, $V(A,B,D,E)$, $W(D,E,F)$, and $X(F,G,H)$, and perform the join operation using a cascade of two-way joins..

\parskip 0pt
\setlength{\parindent}{15pt}

\begin{description}[noitemsep,nolistsep]
  \item[\textsc{Step} 1] Find \emph{dominating attributes} in all the relations. An attribute that occurs in more than one relation is called a dominating attribute~\cite{DBLP:journals/tkde/AfratiU11}.

      For example, in our running example, attributes $A$, $B$, $D$, $E$, and $F$ are dominating attributes.

  \item[\textsc{Step} 2] Implement a hash function over all the values of dominating attributes so that all the identical values of dominating attributes receive an identical hash value with a high probability, and all the different values of dominating attributes receive different hash values with a high probability.

      For example, identical values of $a_i$, $b_i$, $d_i$, $e_i$, and $f_i$ receive an identical hash value, and any two values $a_i$ and $a_j$, such that $i\neq j$, probably receive different hash values (a similar case exists for different values of attributes $B$, $D$, $E$, $F$).

  \item[\textsc{Step} 3] For all the other non-dominating joining attributes' (an attribute that occurs in only one of the relations) values, we find metadata that includes size of each of the values.

  \item[\textsc{Step} 4] Now perform 2-way cascade join using Meta-MapReduce and follow a mapping schema according to a problem for assigning inputs (\textit{i}.\textit{e}., outputs of the map phase) to reducers.

  For example, in equijoin example, we may join relations as follows: first, join relations $U$ and $V$, and then join the relation $W$ to the outputs of the join of relations $U$ and $V$. Finally, we join the relation $X$ to outputs of the join of relations $U$, $V$, and $W$. Thus, we join the four relations using three iterations of Meta-MapReduce, and in the final iteration, reducers \texttt{call} the original required data.
\end{description}

\noindent \textit{Example}: Following our running example, in the first iteration, a mapper produces $\langle H(a_i), [H(b_i), |c_i|, H(d_i)]\rangle$ after processing a tuple of the relation $U$ and $\langle H(a_i), [H(b_i), H(d_i),H(e_i)]\rangle$ after processing a tuple of the relation $V$ (where $H(a_i)$ is a key). A reducer corresponding to $H(a_i)$ provides $\langle H(a_i), H(b_j), |c_k|, H(d_l), H(e_z)\rangle$ as outputs.

In the second iteration, a mapper produces $\langle H(d_i), [H(a_i), H(b_j), |c_k|, H(e_z)]\rangle$ and $\langle H(d_i), [H(e_i), H(f_i)]\rangle$ after processing outputs of the first iteration and the relation $W$, respectively. Reducers in the second iteration provide output tuples by joining tuples that have an identical $H(d_i)$. In the third iterations, a mapper produces $\langle H(f_i), [H(a_i), H(b_i), |c_i|, H(d_i), H(e_i)]\rangle$ or $\langle H(f_i), [|g_i|,|h_i|]\rangle$, and reducers perform the final join operations. A reducer, for key $H(f_i)$, receives $|g_i|$ and $|h_i|$ from the relation $X$ and output tuples of the second iteration, provides the final output by calling original input data from the location of user.

\begin{theorem}[The communication cost]
\label{th:communication cost multi round}
Using Meta-MapReduce for the problem of join where values of joining attributes are large, the communication cost for the problem of join of $k$ relations, each of the relations with $n$ tuples, is at most $3knp\cdot log\ m +h(c+w)$ bits, where $n$ is the number of tuples in each relation, $m$ is the maximal number of tuples in $k$ relations, $p$ is the maximum number of dominating attributes in a relation, $h$ is the number of tuples that actually join, and $w$ is the maximum required memory for a tuple.
\end{theorem}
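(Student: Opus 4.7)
The plan is to generalize the accounting used in Theorem~\ref{th:communication cost heavy size of joining value} to a cascade of two-way joins, splitting the total cost into three contributions: (i) the initial upload of metadata from the user's site to the mappers, (ii) the transport of intermediate metadata between iterations (from maps to reduces), and (iii) the final fetch of original tuples triggered by the \texttt{call} function. The first contribution will yield the $3knp\cdot\log m$ term; the last two will together be absorbed into $h(c+w)$, exactly as in the two-relation case.

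For (i), I would first argue, as in Theorem~\ref{th:communication cost heavy size of joining value}, that hashing the dominating-attribute values into a space of size $m^3$ gives unique identifiers with high probability, costing at most $3\log m$ bits per hashed value. Since each tuple has at most $p$ dominating attributes, and non-dominating joining attributes are handled by their size metadata (whose contribution is absorbed into the same $3\log m$ per-value bound), each tuple requires at most $3p\log m$ bits of metadata. Multiplying by $n$ tuples per relation and $k$ relations yields the bound $3knp\log m$ on the user-to-mapper traffic; this is the cleanest step.

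For (ii) and (iii), I would show that across the cascade, no iteration inflates the cost beyond what a single two-way join would. In every intermediate iteration only hashed identifiers and size metadata are forwarded, so a tuple that eventually contributes to the final join still carries at most $c$ bits of key metadata when relayed between map and reduce phases. Thus the $h$ tuples that participate in the final join contribute at most $hc$ bits of intermediate map-to-reduce traffic and $hw$ bits when their original data is pulled by \texttt{call} at the last iteration, giving the $h(c+w)$ term. Summing (i), (ii), and (iii) yields the claimed $3knp\cdot\log m + h(c+w)$ bound.

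The main obstacle I anticipate is justifying that the cascade does not multiply the $h$ factor: a tuple could in principle be replicated across intermediate reducers as partial join products are carried forward. I would handle this by counting only tuples that survive all $k-1$ cascade steps (there are at most $h$ of these by definition), and by noting that the intermediate forwarding of metadata for non-surviving tuples either terminates early (the \texttt{call} is never issued for them, so the $w$-cost does not appear) or is already paid for inside the $3knp\log m$ upload term rather than through $h$. Verifying this accounting carefully — especially that the replication rate of intermediate tuples does not smuggle additional factors into $h(c+w)$ — is the delicate part of the argument.
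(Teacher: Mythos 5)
Your decomposition is exactly the one the paper uses: $3\log m$ bits per hashed dominating-attribute value (carried over from Theorem~\ref{th:communication cost heavy size of joining value}) times $knp$ for the upload, plus $hc$ for map-to-reduce metadata of joining tuples and $hw$ for the final \texttt{call}, so the proposal matches the paper's proof. If anything you are more careful than the paper, which simply asserts the $h(c+w)$ term without addressing the replication of intermediate partial-join metadata across the cascade --- the concern you flag in your last paragraph is real but is left implicit in the published argument as well.
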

\begin{proof}
According to Theorem~\ref{th:communication cost heavy size of joining value}, at most $3\cdot log\ m$ bits for metadata are required for a single value; hence, at most $3knp\cdot log\ m$ bits are required to move metadata from the user's site to the site of mappers-reducers. Since at most $h$ tuples join and the maximum size of a tuple is $w$, at most $hc$ and at most $hw$ bits from the map phase to the reduce phase and from the user's site to the reduce phase, respectively, are transferred. Hence, the communication cost is at most $3knp\cdot log\ m +h(c+w)$ bits.
\end{proof}

%\subsection{Incorporating the Reducer Capacity}
%Capacity of a reducer is the maximum size of inputs that can be assigned to that reducer. We follow a similar technique for regarding the reducer capacity as suggest in~\cite{TR}. Here, we explain in short how Meta-MapReduce does not assign more inputs to a reducer than its capacity. Recall that in Meta-MapReduce, mappers process metadata, which contain sizes of inputs, and generate some key-value pairs; in order to follow the reducer capacity, mappers generate key-value pairs according to assigned problems and following algorithms given in~\cite{TR}. Following these algorithms, no reducer is assigned more input sizes than its capacity. After that, reducers provide final outputs, as suggested in Section~\ref{subsec:Meta-MapReduce framework}. Consequently, Meta-MapReduce is able to regard the reducer capacity, and solutions to problem where an output depended on at least two inputs.

\section{Versatility of Meta-MapReduce}
\label{sec:Versatility of Meta-MapReduce}
Meta-MapReduce decreases the amount of data to be transferred to the remote site and intermediate data, if the final output does not depend on all the inputs. Especially, the problems, where the amount of intermediate data is much larger than the input data to the map phase and all the inputs do not participate in the final output, fit well in the context of Meta-MapReduce. In this section, we provide two common problems that can be solved using Meta-MapReduce.

\parskip 0pt
\setlength{\parindent}{15pt}

%\subsection{$k$-NN Problem using Meta-MapReduce}
%\label{subsec:k-NN problem using Meta-MapReduce}
\medskip\noindent\textbf{$k$-NN Problem using Meta-MapReduce.}
\emph{Problem statement}: $k$-nearest-neighbors ($k$-NN) problem~\cite{DBLP:conf/edbt/ZhangLJ12,DBLP:journals/pvldb/LuSCO12} tries to find $k$-nearest-neighbors of a given object. Two relations $R$ of $m$ tuples and $S$ of $n$ tuples are inputs to the $k$-NN problem, where $m<n$. For example, relations $R$ and $S$ may contain a list of cities with full description of the city, images of places to visit in the city. Following that a solution to the $k$-NN problem finds $k$ cities from the relation $S$ for each city of the relation $R$ in a manner that the distance between two cities (the first city belongs to $R$ and the second city belongs to $S$) is minimum, and hence, $km$ pairs are produced as outputs. A basic approach to find $k$-NN is given in~\cite{DBLP:conf/edbt/ZhangLJ12} that uses two iterations of MapReduce, where the first iteration provides local $k$-NN; and the second iteration provides the global $k$-NN for each tuple of $R$.

\emph{Communication cost analysis}: %Consider that the locations of relations and mappers-reducers are different.
The communication cost is the size of all the tuples of $R$ and $S$ that are required to move to the location of mappers-reducers, and then, tuples from the map phase to the reduce phase in the two iterations. If $k\leq m$, then it is communication in-efficient to move all the tuples of $S$ from the user's location to the location of mappers and from the map phase to the reduce phase. Hence, by sending only metadata of each tuple a lot of communication can be avoided.

%\smallskip\noindent\textit{$k$-Nearest-Neighbors and Similarity Search.} We wish to find pairs out of $km$ pairs (provided by a solution to the $k$-NN problem) that have almost similar distance between two tuples (or cities). In order to do, one more iteration is required after two iterations, which were used to compute $k$-NN. The communication cost in this case follows the communication cost of the $k$-NN problem and the size of (copies of) $km$ pairs to send from the map phase to the reduce phase in the third iteration. In this example, it is clear that it is communication inefficient to move all the $n$ tuples of the relation $S$ in the first two iterations and all the $mk$ tuples of the relations $R$ and $S$ in the third iteration, while the final output can be commutated using metadata of tuples of the relation $S$ and send data in the third iteration.

%\subsection{BFS Execution or Path Findings on a Social Networking Graph using Meta-MapReduce}
%\label{subsec:BFS Execution or Path Findings on a Social Networking Graph using Meta-MapReduce}
\begin{figure}[h]
\begin{center}
\includegraphics[width=65mm, height=20mm]{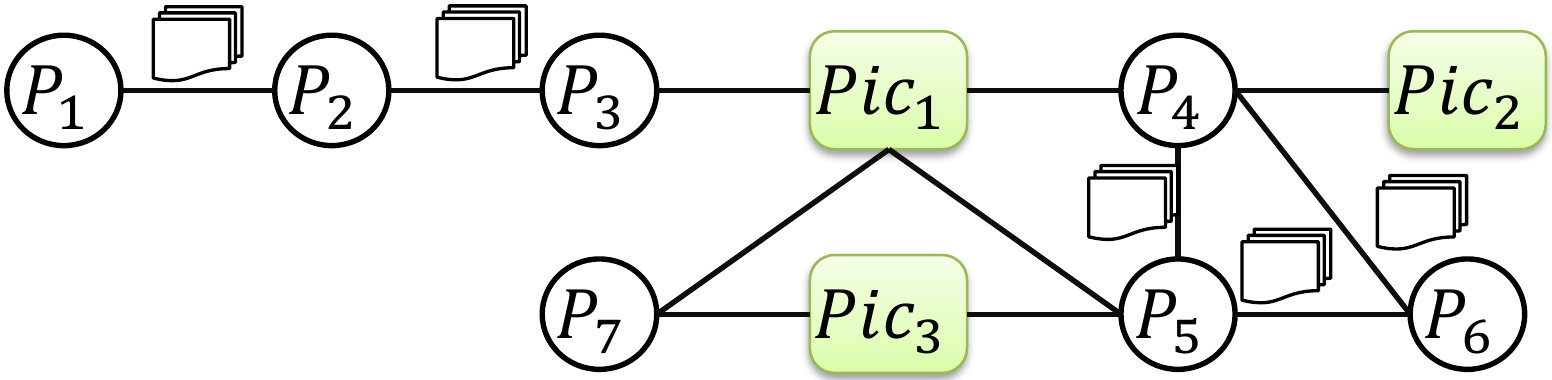}
\end{center}
\caption{A graph of a social network.}
\label{fig:social_graph}
\end{figure}

\medskip\noindent\textbf{Shortest Path Findings on a Social Networking Graph using Meta-MapReduce.} \emph{Problem statement}: Consider a graph of a social network, where a node represents either a person or a photo, and an edge exists between two persons if they are friends or between a person and a photo if the person is tagged in the photo; however, there is no edge between two photos; see Figure~\ref{fig:social_graph}. The implementation of a shortest path algorithm on the graph results in paths between two persons $i$ and $j$ with common information (which exist on the paths) between the two persons $i$ and $j$.

\emph{Communication cost analysis}: We want to find a shortest path between persons $P_1$ and $P_6$, refer to Figure~\ref{fig:social_graph}. A shortest path algorithm will provide a path $P_1$-$P_2$-$P_3$-$\mathit{Pic_1}$-$P_5$-$P_6$ or $P_1$-$P_2$-$P_3$-$\mathit{Pic_1}$-$P_4$-$P_6$ and show the common things between every two persons on the path and the photo, $\mathit{Pic_1}$, as a connection between $P_3$ and $P_5$. Note that in this case it is communication in-efficient to send all the photos and common information between every two friends of the graph to the site of mappers, because most of the nodes are removed in the final output. Hence, it is beneficial to send metadata of people and photos to the location of mappers and to process metadata to find a shortest path between two person at the reduce phase. In Figure~\ref{fig:social_graph}, reducers that provide the final output \texttt{call} information of person ($P_1, P_2, P_3, P_5, P_6$) and photo ($\mathit{Pic_1}$). Consequently, there is no need to send $\mathit{Pic_2}$ and $\mathit{Pic_3}$.

%\smallskip \textit{The opportunity of hashing}. In certain situations, there is a need to find identical big chunks of data, may it be photos, movies, pdf files, etc. Meta-MapReduce uses hash values during its operation and therefore may support an efficient treatment for such cases. For example, suppose we are given a set of photos, and we want to find whether two photos are identical or not. Here, it is required to compare every two photos at one reducer. However, it is not efficient to move all photos from the user's location to the location of computation if there are only a few photos that may be identical. Using Meta-MapReduce, we can use a hash function such that identical photos have a unique hash value with a high probability. Thus, we can send only hash values to the location of computation, and reducers that receives more than one hash value \texttt{call} the photos corresponding to the hash value and provide outputs.

%\subsection{Meta-MapReduce analysis and comparison with the classical MapReduce}
%\medskip\medskip \noindent\textbf{Meta-MapReduce and the classical MapReduce.}
%1. Fault tolerance. 2. Scalability. 3. Memory requirements. 4. Time complexity

\section{Conclusion}
Impacts of the locations of data and mappers-reducers, the number of iterations, and the reducer capacity (the maximum size of inputs that a reducer can hold) on the amount of data to be transferred to the location of (remote) computations are investigated. Based on the investigation, we found that it is not required to send the whole data to the location of computation if all the inputs do not participate in the final output. Thus, we proposed a new algorithmic technique for MapReduce algorithms, called Meta-MapReduce. Meta-MapReduce decreases a huge amount of data to be transferred across clouds by transferring metadata for a data field, rather than the field itself, metadata that is exponentially smaller, and processes metadata at the map phase and the reduce phase. We demonstrated the impact of Meta-MapReduce for solving problems of equijoin, $k$-nearest-neighbors finding, and shortest path finding. In the simplest case of equijoin of two relations, Meta-MapReduce requires at most $2nc+h(c+w)$ bits to be transferred to the location of computation, while the classical MapReduce requires at most $4nw$ bits to be transferred to the location of computation. Also, we suggest a way to incorporate Meta-MapReduce for processing geographically distributed data and for executing a multi-round MapReduce computation.

%\begin{spacing}{1}
%\bibliographystyle{abbrv}
%\bibliographystyle{IEEEtran}
%\bibliography{meta-MRRelatedWork-full}
%\end{spacing}

% Generated by IEEEtran.bst, version: 1.14 (2015/08/26)

\ifCLASSOPTIONcaptionsoff
  \newpage
\fi

\end{document}